\documentclass[a4paper]{article}

\usepackage{tcolorbox}

\usepackage{url}
\usepackage{amsmath}
\usepackage{amsthm}

\newtheorem{theorem}{\bf Theorem}[section]
\newtheorem{condition}{\bf Condition}[section]


\begin{document}

\title{Observement as Universal Measurement}

\author{
David G. Green\footnote{Faculty of Information Technology, Monash University, Clayton, Victoria, Australia.} \footnote{Corresponding Author. Email: david.green@monash.edu},
Kerri Morgan\footnote{School of Information Technology, Faculty of Science Engineering \& Built Environment, Deakin University, Geelong, Australia.},
and Marc Cheong\footnote{Centre for AI and Digital Ethics (CAIDE), School of Computing and Information Systems, University of Melbourne, Parkville, Australia.}
}

\maketitle

\begin{abstract}
Measurement theory is the cornerstone of science, but no equivalent theory underpins the huge volumes of non-numerical data now being generated. In this study, we show that replacing numbers with alternative mathematical models, such as strings and graphs, generalises traditional measurement to provide rigorous, formal systems (`observement') for recording and interpreting non-numerical data. Moreover, we show that these representations are already widely used and identify general classes of interpretive methodologies implicit in representations based on character strings and graphs (networks). This implies that a generalised concept of measurement has the potential to reveal new insights as well as deep connections between different fields of research.
\end{abstract}

{\bf Subjects:} complexity, mathematical modelling

{\bf Keywords:} formal languages, graph theory, measurement, non-numeric data, observement, complexity

\section{Introduction}\label{secIntro}

It is impossible to overstate the influence that measurement has exerted on scientific thinking. 
In physics, for instance, both theory and experiment are dominated by concepts that are expressed as numerical values: time, distance, mass, charge, etc. 
Such measures are so familiar we hardly 
think about their implications; and yet they define fundamental concepts. 
In the 20th Century, quantitative thinking came to play a role in almost every field of science.  

The association of the term `measurement' with numbers is so deeply ingrained it 
blinds us to fundamental benefits that the \textit{process} of measurement itself conveys. 
A formal system of measurement performs several crucial roles in research \cite{tal2017sep}.

\maketitle

\begin{enumerate}
\item \emph{It ensures that data are gathered in a standard way.} The application of standards ensure that measurements are taken in a consistent and comparable form. 
This makes it possible to compare and combine data from  different sources. 
\item \emph{It produces data with well-known properties.} 
Representing attributes as numbers means that we can express  relationships between values using equations and other well-known tools. 
\item \emph{It has the power of mathematical abstraction. } 
Representing concepts as numbers makes it possible to develop mathematical models and techniques that apply to a wide range of phenomena. 
\item \emph{It shapes the development of theory and methods.} 
Combined with the power of quantitative mathematics, measurable variables (such as mass and length) shape the way we think about the world.
\end{enumerate}

Measurement is an indispensable cornerstone of science, but its success may limit 
our thinking. It has led to a culture where 
attempts are made to reduce problems and phenomena to numbers.
This approach may oversimplify or bias thinking: the quality of human performance is more than the value of (say) dollar profit per quarter; avian behavior is more complex than the number of times birds visit their nest. 
Needless to say, socio-cultural and historical contexts, and nuances of human behaviour, are disregarded within this approach.

A consequence of the information revolution is that organisations collect
enormous volumes of data, and much of it is non-numerical in nature. 
However, theoretical foundations to underlie current practices in data collection and analysis have lagged behind. 
The many advantages of formal measurement suggest that there are 
benefits to be gained by extending its underlying principles 
to non-numerical data. 
In this study, we show that replacing arithmetic with other mathematical models 
in the definition of measurement provides formal systems for representing non-numerical data. 
 
Because the term `\textit{measurement}' is so closely associated with the obtaining of, and assignment to numbers
\cite{vonHelmholtz1895zahlen,campbell1920,tal2017sep}, 
we make the distinction clear by introducing the term 
\emph{observement} for formal systems of observation.
In our definition (Section \ref{formal definition}), observement is a formal system that maps real-world phenomena to well-defined representations. 
In this sense, measurement is a special case, or subset of observement, in which the representation is a number. 

We argue that many kinds of observations could, or already can be, considered as observement. To illustrate how it works in practice, we present two examples as case studies. These are based on abstract models 
that are widely-used for representing non-numeric data: 
strings of symbols and graphs (networks). 
A variety of analytic techniques have supported these representations --  many of them common to both -- as well as rapidly-developing bodies of supporting theory. 


\section{Measurement as Formal Observation}
\label{secMeasurementFormalObservation}
The theory of measurement arose from the need to codify and standardise 
procedures for representing properties by numerical values. 
Its development as a formal approach to science dates back at least to the 
late 17th century and John Locke's interest in metrology \cite{anstey2016locke}. 

Perhaps the most general approach to measurement is the \emph{Representation Theory of Measurement} (RTM). 
It assumes that a system assigns \textit{numbers} to objects in such a way that they \emph{represent} a particular property. 
Formally, there is a \textbf{\textit{measurement system}} $(\mathcal{S}, \mathcal{N}, M)$   where the system 
$\mathcal{S}=\langle S,R\rangle$ is a set $S$ of objects and a finite set $R$ of relations on $S$,  
and 
$\mathcal{N} = \langle N,P\rangle$  
is the corresponding numerical system, consisting of numbers $N$ and 
a finite set of relations $P$ on $N$. $M$ is a mapping from $S$ to $N$. 
\begin{tcolorbox}
For mass, for instance, the system is comprised of the set of all physical objects $S$ and the set of relations between them, such as ``heavier''. 
The numerical system would comprise non-negative real numbers and relations such as $\geq$. 
\end{tcolorbox}

The RTM specifies the following three conditions that a measure must satisfy 
 \cite{finkelstein1984review}, \cite{michell2007representational}: 
 
\begin{condition}
     \textbf{(M1) Representation} -- there is an experimental process (mapping) that defines a homomorphism (property-preserving map) from objects to numbers.
\end{condition}

Formally, measurement is an experimental process that generates a mapping $M: S \rightarrow N$. 
Moreover $M$ must be a homomorphism, 
that is, for any relation $r \in R$, there is a corresponding relation 
$p \in P$, such that for any $x_{1}, x_{2}, \ldots x_{k} \in S$:
$r(x_{1}, x_{2}, \ldots x_{k}) \Leftrightarrow p(M(x_{1}), M(x_{2}), \ldots, M(x_{k})). $  The \textit{representation condition} requires that a set $H(S)$ of homomorphisms from $S$ to $N$ can be proved to exist.
\begin{tcolorbox}
For example, if object $a$ is \textit{heavier} than object $b$, 
then the measurements of their masses, $m(a)$ and $m(b)$, must satisfy $m(a) \geq m(b)$. 

\end{tcolorbox}

\begin{condition}
    \textbf{(M2) Existence} -- there must be at least one mapping.
\end{condition}

Clearly, there must be at least one mapping from $S$ to $N$, otherwise there exists no process to ``measure'' objects in $S$.  This leads to the existence condition: 
the set $H(S)$ of homomorphisms from $S$ to $N$ is non-empty. 

\begin{tcolorbox}
For example, procedures for measuring mass ensure that a measure $m$ exists. 
\end{tcolorbox}

\begin{condition}
    \textbf{(M3) Uniqueness} -- any two mappings that satisfy Condition (M1) are equivalent up to homomorphism. 
\end{condition}

This conditions means that measure must be unique in the sense that 
any two measures of the same object must be related, 
that is, if there exists another measure  
$M' : S \rightarrow  N$, 
then $M$ and $M'$ are related by an isomorphism $f : M \rightarrow M'$ such that for any relation $r\in R$ and corresponding relation $p \in P$, 
$ r(x_{1},x_{2},\dots, x_{k}) \Leftrightarrow  p(f(M(x_{1})), f(M(x_{2})),  \ldots, f(M(x_{k}))). $
\begin{tcolorbox}
For example, the mass of an item can be measured in kilograms, $M_{K}$, or pounds, $M_{P}$, but there is a simple transformation $M_{K} \rightarrow M_{P} \times 0.453592$ for converting one to the other. 
\end{tcolorbox}

Procedures for measuring mass ensure that a measure $m$ exists, and 
that the measurement is unique: that is, there is a simple conversion between two different measurement algorithms 
(say) pounds and kilograms. 

One advantage of measurement is abstraction of  entities and properties (and the relations between them) 
to a simple representation: a number (and the corresponding relations between numbers). 
This made it possible for humans to think and work with abstract ideas. 
However, mathematics provides many other models that allow us to represent 
other features of the real world in abstract terms. 
Graphs (networks), for instance, serve as abstract representations for many complex structures (Section \ref{secGraphs}).

Traditionally, measurement deals with numbers, but the essence of measurement is not quantities; it is the rigorous process by which observations are obtained 
\cite{weyl1949philosophy}, 
\cite{mari2017quantities}.
The RTM 
 specifies the features that every measurement system must have. 
However, RTM  
works just as well as a formal system if the underlying 
model is not arithmetic, but another representation model
instead.
This means that we can define formal systems of observation for 
any representation that is based on an underlying formal model.

In order to satisfy the RTM conditions there must be a procedure (an `\textit{algorithm}') that implements the mapping from  real world objects to the abstract model. 
Traditionally, this algorithm is built upon standards that define how to interpret the 
constants and relations in the model. 
For measurement, the standards would typically define what is meant by numbers, $0$ and $1$; 
as well as functions such as `$+$'; and relations, such as `$=$'. 
But such an algorithm need not necessarily output a number. For example, an algorithm might 
output a string of symbols. Our concept of observement builds on this idea of generalising the 
types of outputs from the process as part of the generalisation of measurement to observement.

The use of standards has never been confined to traditional measurement. 
Standards today are used for everything from food preparation to business management, 
from design of petrol pumps to building safety. 
The International Standards Organization \cite{iso2018} maintains thousands of standards for a vast range of items and procedures. 
These standards often include rules for representing data \cite{ansi2018}. 
They are aided by increasing use of automation to acquire, manipulate and display data.

\section{A General Definition of Observement}\label{secAGeneralDefinitionofObservement}

\subsection{Observement as a Generalisation of Measurement}
\label{secGeneralisationMeasurement}
Observement partly subscribes to the realist philosophy of measurement, which ``distinguishes what is measured from how it is measured; [and] holds  that  what  is  measured  are  attributes  of things,  rather  than  things  themselves'' \cite{michell2004}.

As ``numbers are in no way the only usable symbols'' \cite{weyl1949philosophy},  we propose an extension of the representational view of measurement, to remove the dependence -- inherent in measurement -- ``upon an isomorphism between an empirical system and a numerical system'' \cite{michell2004}.  Observement still supports the idea of the ``mapping, of relations between objects and ... entities'' \cite{tal2017sep}, and the traditional notion of numerical measurement can be seen as a subset of observement which designates `numbers' as such entities.

Our definition of observement ties in better with existing concepts of `measuring' systems such as social networks \cite{marsden1990} which need not be reduced to numbers, as they would be (intuitively) better represented as a network graph (Section \ref{secGraphs}).

By reducing measurement to numbers, we abstract away details ranging from temporal detail \cite{Gysi2020-RoyalSociety}  to historical and cultural context per Jos\'{e} Ortega y Gasset's conception of \textit{perspectivism} \cite{holmes2017}. The subjectivity of the observer's ``needs that [actually] ... interpret the world''  \cite{nietzsche1968}; an example is how a social network captures social and cultural context more intuitively than a mere reduction-to-numbers. The \textit{Graph Commons} project ``makes the computation and visualisation of network mappings \textit{accessible in a way that does not rely on mathematical ability} ... as a way to help reshape \textit{shared understandings} in the context of an \textit{active social struggle}'' (\textit{emphases ours})  \cite{McQuillan2018,arikan2016}.


\subsection{Formal Definition of Observement}
\label{formal definition}
We introduce the following definition of observement. In this definition, we generalise Conditions \textbf{M1}--\textbf{M3} in 
RTM (Section \ref{secMeasurementFormalObservation})
to Conditions \textbf{Ob1}--\textbf{Ob3} in Observement below.

Formally, there is an  \textit{observement system} $(\mathcal{S}, \mathcal{O}, m)$, where the system $\mathcal{S}=\langle S, R\rangle$ consists of a set of objects  $S$ and  a finite set  of relations $R$ on $S$; and the corresponding system $\mathcal{O}=\langle O, P\rangle$  where $O$ is a set  of observations, that is, the set of objects that can be obtained by the observement system, and $P$ is a set of relations on $O$.  
In addition, there exists at least one algorithm $m$ that for any input in $S$ determines an observation in $O$. 

\begin{tcolorbox}
For example, graphs are commonly used to model real-world networks.  Let $S$  be a set of social  networks and let $O$ be  the set of relationships on $S$.   Here the observements of the networks are graphs, and  $O$ is the set of graphs and $P$ is the set of relations on graphs.     
\end{tcolorbox}

The \textbf{observement system} satisfies the following conditions:
\begin{condition}
\textbf{(Ob1) Representation Condition} -- The algorithm $m$  generates a homomorphism $h_{m}:S\rightarrow O$ such that for any relation $r\in R$, there is a corresponding relation $p\in P$ such that for any $x_{1},x_{2},x_{3},\ldots, x_{k}\in S$:
$r(x_{1},x_{2},x_{3},\ldots, x_{k})\Leftrightarrow p(h_{m}(x_{1}), h_{m}(x_{2}), h_{m}(x_{3}), \ldots, h_{m}(x_{k}))$.
\end{condition}

\begin{condition}
\textbf{(Ob2) Existence Condition} -- Clearly, there must be at least one mapping from $S$ to $O$, otherwise there is no algorithm to ``observe'' objects in $S$.   This leads to the \textit{existence condition}: the set $A(S)$ of algorithms that give homomorphisms from $S$ to $O$ is non-empty.
\end{condition}

If the observement system satisfies the following additional condition, we say that the observement system is \textit{strong}.

\begin{condition}
\textbf{(Ob3) Uniqueness Condition} --
The algorithm must be unique in the sense that any two algorithms used to observe the
same object must be related. 
That is, if there exists another algorithm $m'$ giving a
homomorphism $h_{m'}:S\rightarrow O$, then there exists a
mapping $f:m\rightarrow m'$ such that for any relation $r\in R$ and 
the corresponding relation $p\in P$, 
$r(h_{m}(x_{1}), h_{m}(x_{2}),  \ldots, h_{m}(x_{k}))
 \Leftrightarrow$  
$p(f(h_m)(x_{1}),
  f(h_m)(x_{2}),\ldots, 
  f(h_m)(x_{k}))$.  
 \end{condition}

  Condition \textbf{Ob3} requires that the observation obtained by any method of observing a given property can be obtained by a mapping from the result obtained from another   method of   observing the same property. Appropriate standards for these methods of observement should ensure this.  This condition holds for measurements such as  length, distance and mass.  It may also hold for some systems of observement.  
  The observement system is called \textit{weak} if this condition  does not hold.

To elaborate, 
condition \textbf{Ob3} does not  hold for all observement systems.  Unlike measurement, where a mapping can be found between algorithms for measuring with numerical outputs, the case may be more complex for observement. 
As an example of a weak observement system where \textbf{Ob3} does not hold,
consider a simple scale for height: \textit{short}, \textit{medium}, \textit{tall}. Suppose that:
System $A$ defines $small$ as $< 150$ cm and $tall$ as $> 183$ cm; whereas  
System $B$ defines $small$ as $< 155$ cm and $tall$ as $> 178$ cm.
Both systems capture the intuitive and formal properties of height, but there is no mapping between the two.  

The above definition provides a basic set of criteria that can 
serve as guidelines when setting up an observement system. 
We now argue that several such representations already exist. 
Below, we consider two non-numerical categories of observement systems:  the first maps objects to strings, and the second maps objects to graphs. 


\section{Strings} \label{secStrings}

Some experimental methods that produce strings as their outputs satisfy the formal definition of observement. By $strings$  we mean well-formed sequences of symbols within a formal language. 

Data in the form of strings is common because many processes form sequences, especially over time, but also in space or some other ordering. An important case is written language, which consists of sequences of characters arranged according to the relevant syntax. This idea is not restricted to natural languages, but also includes formal languages (e.g. arithmetic) and computer code. For instance, formal languages (e.g. L-system models of plants) have been widely used to describe growth patterns \cite{prusinkiewicz2012algorithmic}. 

Strings have also been used 
to define complexity. 
An information theoretic interpretation is that the complexity 
of a system is the length of a message needed to describe it 
\cite{chaitin1966length, kolmogorov1968three, 
papentin1980order}. 

The use of symbolic strings to represent data is common.  We look at two examples, animal behaviour and genetic codes, that  satisfy the definition of observement.

\subsection{Animal Behaviour}

\subsection{A Simple Example}
Strings have been used to encapsulate sequences of actions. We briefly describe methods for modelling animal behaviour using strings. 

\subsubsection{Turtle Geometry}
A simple string language $S = \langle Alphabet, Syntax\rangle$  consists of all the strings that can be  made from an alphabet by applying its syntax (a set of production rules such as replacement, 
addition and concatenation). 
For example, a grammar for a turtle geometry is given in Figure \ref{box:TG}. 
\footnote{In this and the following examples, we use the conventions of Backus Nauer
Form (BNF) notation to define syntax. 
The symbol `+' denotes 
    one or more repetitions and the symbol `\big{|}' represent alternatives.    
 
}

\setlength{\fboxsep}{10pt}
\begin{figure}[h!]

\begin{center}
\noindent\fbox{%
    \begin{minipage}{0.7\linewidth}
              $S = \langle $Alphabet, Syntax $\rangle$\\
        Alphabet = $\lbrace L, R, F, T \rbrace \cup \lbrace \langle$ path $\rangle\rbrace$\\
        Syntax = $\lbrace \langle$ path $\rangle$, 
       
       $\langle$ path $\rangle \rightarrow$ $F\langle$ path $\rangle$, 
        
        $\langle$ path $\rangle \rightarrow$ $L\langle$ path $\rangle$, 
        
        $\langle$ path $\rangle \rightarrow$ $R\langle$ path $\rangle$,
        
        $\langle$ path $\rangle \rightarrow T $
        
        $\rbrace$
    \end{minipage}}
    \caption{A simple grammar for Turtle Geometry. Here $L$, $R$, $F$, $T$ represent $Left$, $Right$, $Forward step$, $Terminate$ respectively. } \label{box:TG}
    \end{center}
    \end{figure}

The simple grammar in Figure \ref{box:TG} generates strings, 
such as $FFLFFFRFT$, which describe simple paths 
in turtle geometry~\cite{abelson1986turtle}. 
One advantage is that it provides a formal method to compare patterns: similar patterns have similar strings.

\subsubsection{Animal Behaviour}

The idea of representing behaviour as a string of actions has 
 been employed by ethologists \cite{shih2000ethom}. 
An animal's behaviour is recorded as a string of symbols
in a language $L_B=\langle A_B, S_B\rangle$ (Figure \ref{Animal_behaviour}). 
We can regard this system as an observement $\langle S,O,m\rangle$ by defining $S$ to be sequences of animal behaviour and $O = L_B$. 
The mapping $m$ is defined by first assigning symbols to particular actions (the semantics) (Figure \ref{Animal_behaviour}). 
To record a sequence of behaviour, the observer uses an event recorder, 
which is a device with keys
related to predefined actions.
Each time an action occurs, the observer presses the corresponding key
producing a string that describes the animal's behaviour. 
The advantage is that the sequence provides 
an analytic approach that simplifies the task of identifying similar 
or repeating patterns of behaviour. 

The above approach to obtaining behavioural strings ensures that $m$ is a homomorphism. If any sequence of behaviour is followed by any of the defined actions, then the symbol for that action will be the next  entry in the recorded string. So the definition satisfies Condition \textbf{Ob1}. Also, the definition of the recording process ensures 
that at least one algorithm exists, so Condition \textbf{Ob2} is satisfied.
Finally, the systems satisfies Condition \textbf{Ob3} because if we use any other symbols to express individual actions, then simple replacement of 
corresponding symbols defines a mapping from one representation 
system to the other. 

\begin{center}
\begin{figure}[!h]
	\includegraphics[width=\linewidth]{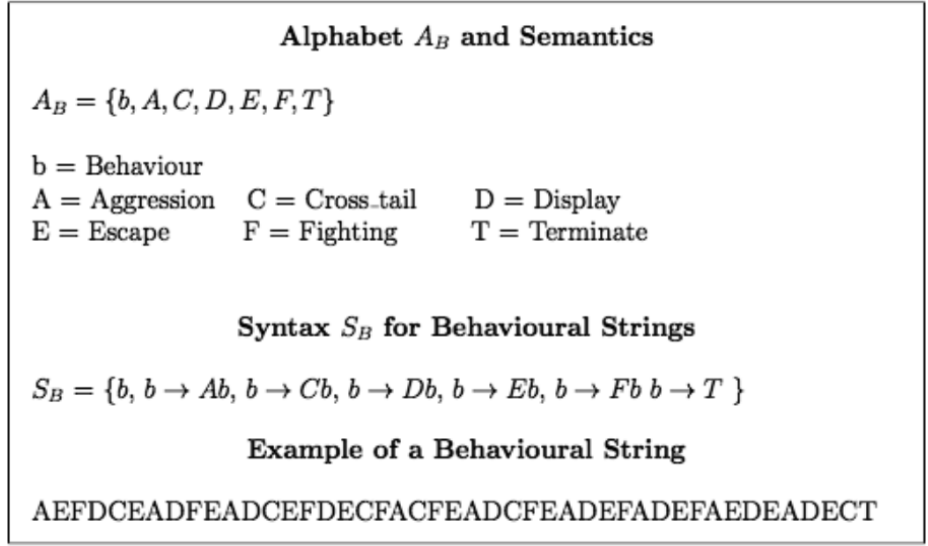}
    \caption{Example of using semantics to record animal behaviour as a string of actions based on \cite{shih2000ethom}.  }
	\label{Animal_behaviour}
    \end{figure}
\end{center}

\subsection{The Genetic Code}
A common representation of data by strings is used to record sequences of genes and proteins. Genetic data consists of DNA sequences where each character is one of four bases:  
Adenine ($a$), Cytosine ($c$), Guanine ($g$) and Thymine ($t$) (see Figure \ref{DNAstring}). 
Most genes code for proteins (strings of amino acids). 
In DNA, the bases are grouped in sequences of three to form \textit{codons}, each of which 
corresponds to an amino acid, or signals the start or end of the sequence
(see Table~\ref{tab:amino}). 

\begin{table}[h]
\includegraphics[width=1\linewidth]{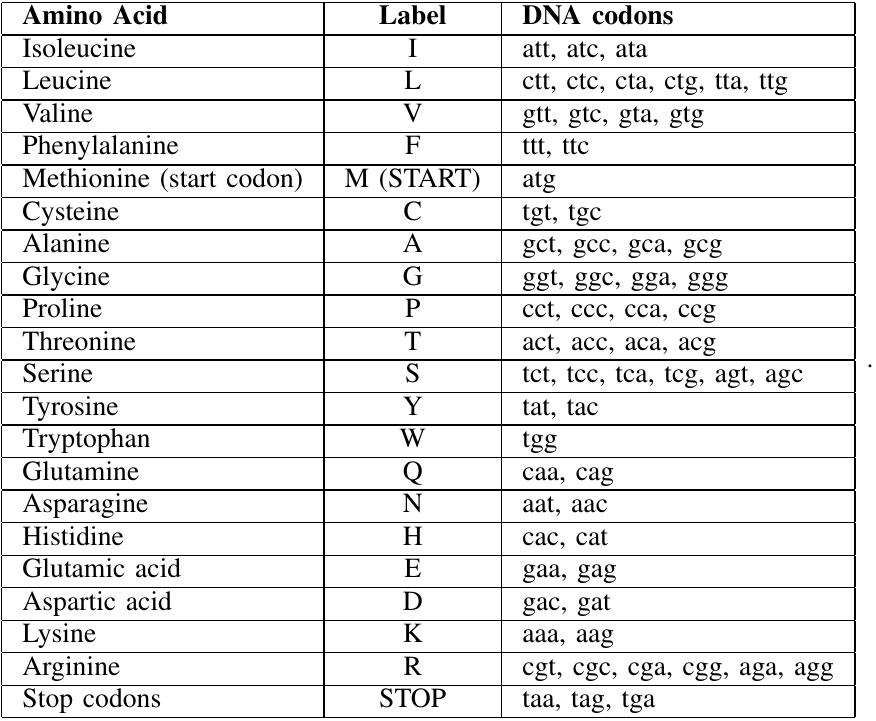}
	\centering
	\caption{DNA codes for amino acids. 
    Note that the start codon also codes for the 
amino acid Methionine, and is usually represented as `M'
    \cite{griffiths2000introduction, graur2000fundamentals}}. 
	\label{tab:amino}
\end{table}

For instance, the DNA codon $atg$ signals both the start of a gene sequence 
and the amino acid Methionine ($M$), the codon $atc$ produces Isoleucine ($I$), and the 
codon $tag$ signals the end of a gene. There are 64 codons, but only 20 amino acids, 
so there is a lot of redundancy. For instance,  
there are six codons 
$tct$, $tcc$, $tcs$, $tcg$, $agt$, and $agc$ 
for the amino acid Serine ($S$). 
Thus there is a strict correspondence between the DNA sequences in genes and the 
amino acid sequences in proteins (Figure \ref{DNAstring}).   

\begin{figure}[!h]
\noindent\fbox{%
    \begin{minipage}{\linewidth}
\begin{center}
	\includegraphics[width=\linewidth]{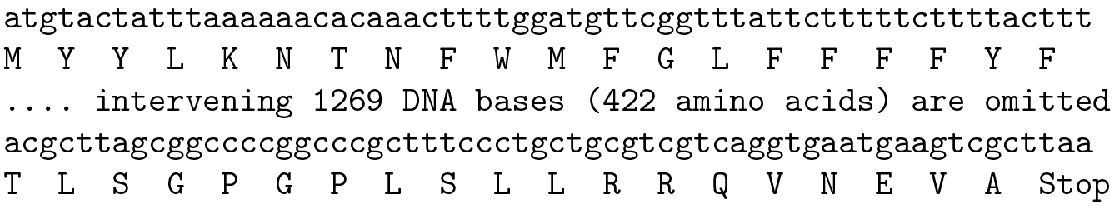}
\end{center}
\end{minipage}}
	\caption{The start and end of a DNA sequence of 1554 bases and the corresponding amino sequence for lactose permease [$Escherichia$ $coli$] (\textit{Source: GenBank ID AAA24054.1}).  Each subsequence of three bases (a \emph{codon}) 
codes for an amino acid. The amino acid string corresponding to this gene 
is listed beneath it     
    \cite{Genbank} }
	\label{DNAstring}
    \end{figure}

\begin{figure}[!h]
\begin{center}
	\includegraphics[width=0.9\linewidth]{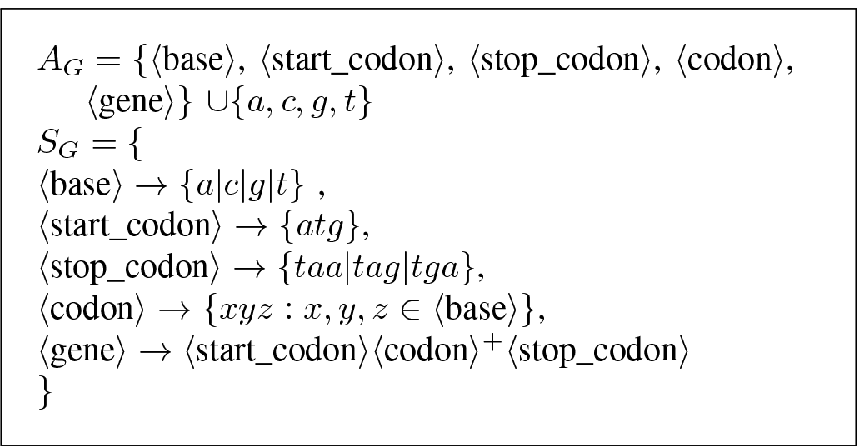}\end{center}
    \caption{Language for describing gene sequences. We use the conventions of 
    Backus Nauer form (BNF) notation to define syntax. The symbol `+' denotes 
    one or more repetitions and the symbol `\big{|}' represent alternatives.    
    }\label{figGS}
    \end{figure}

To interpret the acquisition of gene sequences as observement, we define 
a  language $L_G = \langle A_G, S_G\rangle$ to describe gene sequences 
(Figure \ref{figGS}). 
So for genes, we define the observement system $\langle S,O,m\rangle$ by 
interpreting $S$ to be the set of all genes and setting $O$ to be the language $L_G$. 

To define the mapping $m:S \rightarrow O$, we associate the constants ($a, c, g, t$) in $A_G$ with the DNA bases listed above. 
A comprehensive methodology exists for recording gene sequences including  processes to identify the base sequences and a variety of software tools to construct sequences and identify the genes ~\cite{Gilbert1974Sequences}, ~\cite{Genbank}. 

The above definition satisfies Condition \textbf{Ob1}. The mapping $m$ is a homomorphism, because both real genes and the sequences that describe them are strings, and $m$ satisfies the same rules for strings: any substring of a sequence corresponds to a section of the gene it represents. 
The above definitions also ensure that there is at least one algorithm 
to obtain a description of a gene sequence, so it satisfies condition \textbf{Ob2}.
Finally, gene sequencing satisfies the \textit{strong} Condition \textbf{Ob3}.  Suppose, we used a different mapping $m: S\rightarrow O'$ where $O'$ were different symbols to represent the four bases.  
Then the mapping $f: O'\rightarrow O$ which performs a simple replacement of corresponding symbols gives a mapping from one  system to the other. 

Although the observement of genes described here is simple, the biological processes involved in the translation of genes into proteins are complex 
Many biological issues  are beyond the scope of this 
discussion, such as reading frames, exons and introns, controller genes, and 
the roles of messenger RNA and ribosomes.

\subsection{Proteins}

Most genes provide the code for producing proteins, which are formed as strings 
of amino acids. 

Just as we did for genes, we can define a simple language $L_P = \langle A_P, S_P\rangle$ 
to represent proteins as amino acid sequences (Figure \ref{fig:proteinExample}).

\begin{figure}[!h]
\begin{center}
\includegraphics[width=0.9\linewidth]{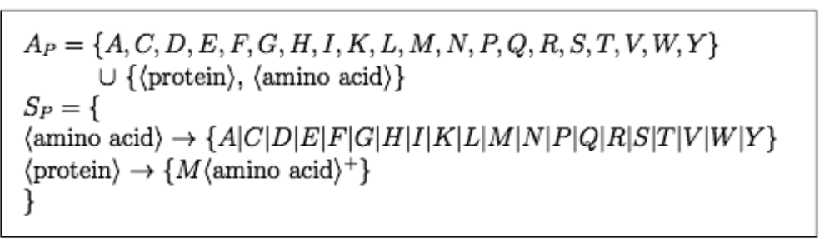}
\end{center}
\caption{Language for defining proteins as strings of amino acids.}\label{fig:proteinExample}
\end{figure}

This language will produce any amino acid sequence (e.g. Figure \ref{DNAstring}).

The proof that amino acid sequencing is a \textit{strong} observement system parallels 
the argument for gene sequencing almost exactly. The standards associate the 
constant symbols with the amino acids they represent (\textbf{Ob1}). As for DNA, there are 
many widely known methods (\textbf{Ob2}) for extracting and interpreting amino acid strings 
and protein structure \cite{Genbank}. 
As Figure~\ref{DNAstring} shows, the observement systems for representing genes 
and proteins are related. 
Gene sequences map to protein sequences (\textbf{Ob3}). 
The equivalences (Table~\ref{tab:amino}) define a 
homomorphism from gene sequences onto amino acid sequences.

\section{Graphs}\label{secGraphs}

\subsection{Network Representations}\label{secNetworks}

Networks have gained increasing prominence in many areas \cite{Gysi2020-RoyalSociety}. Examples include social networks,  infrastructure networks, and software systems.  
They are used in the analysis of biological
networks, such as food webs and genetic regulatory networks.
Diagrams of networks are widely used to convey
information, such as organisational structure, family trees,
flow diagrams and semantic relationships 
(see Figure \ref{Graph_data}). 

As networks underlie many diverse fields and applications, it is important to have  methods of observing behaviour in networks, understanding interactions between entities within networks, and comparisons between networks.

Graphs are widely used to model networks and network behaviour.  
A graph is a pair $(V,E)$,  where  $V$ is a set of nodes (or vertices) and  $E\subseteq V\times V$ 
is a set of edges (or arcs) connecting pairs of nodes.  (A graph with directed edges is called a \textit{digraph}.)  
For clarity, here we define networks to be graphs in which the nodes and edges can have associated attributes.    
Nodes are an abstraction of entities in the networks, and  edges are an abstraction of relationships between entities.  For example, the graph in Figure \ref{Graph_data} models species as nodes and the `to eat' relationship by directed edges.

Many important  network properties can be determined from a graph.  The use of graphs to  model networks is powerful, as it allows the use of a wide range of well-developed tools and methods to extract valuable information about networks such as network reliability \cite{EGL-1991, acgbn-2015}, connectedness \cite{p-1980, sb-2002}, common network structure  \cite{ESDS-2016, pksbk-2012}, efficient resource allocation \cite{pzejp-2014, EAPPGSKMM-2011}, to identify flows within networks \cite{YL-2013,umts-2009}, and efficient route detection \cite{cdpp-2012, pk-2009}.  Graphs and networks provide a common theoretical model for patterns of interactions, where common interactions inside a network can be represented as a subgraph (subnetwork).

For example,  Figure \ref{Graph_data}(a) is a directed graph where each node represents a species (frogs, spiders, insects, etc) and each directed edge represents a relationship between species (e.g. $(frog,insect)$ denotes `frogs eat insects").


\begin{figure}[htb!]
	\centering	
  \includegraphics[width=0.95\linewidth]{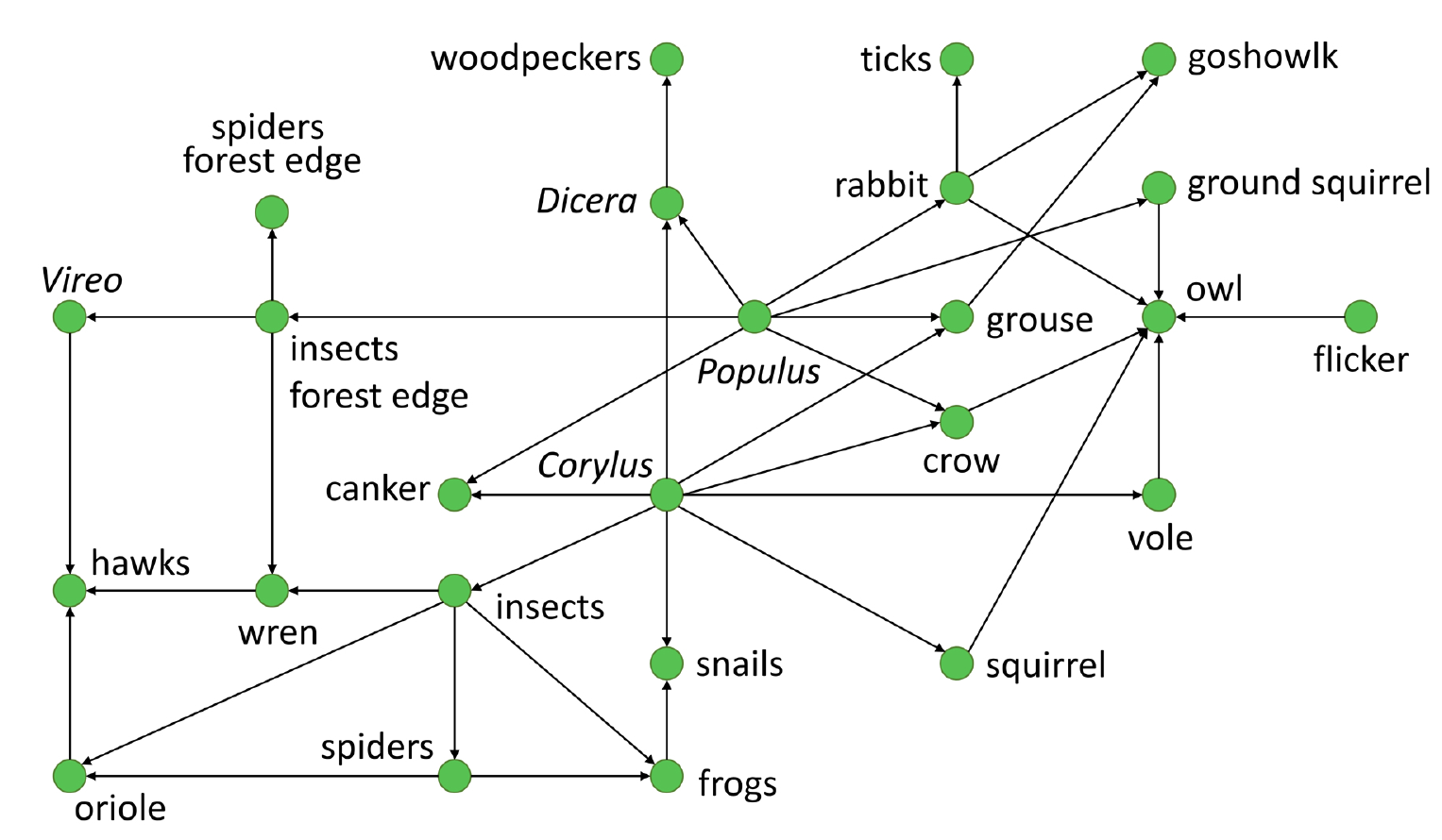}\\
   (a)\\
   
 \vspace{5mm}
 \includegraphics[width=0.8\linewidth]{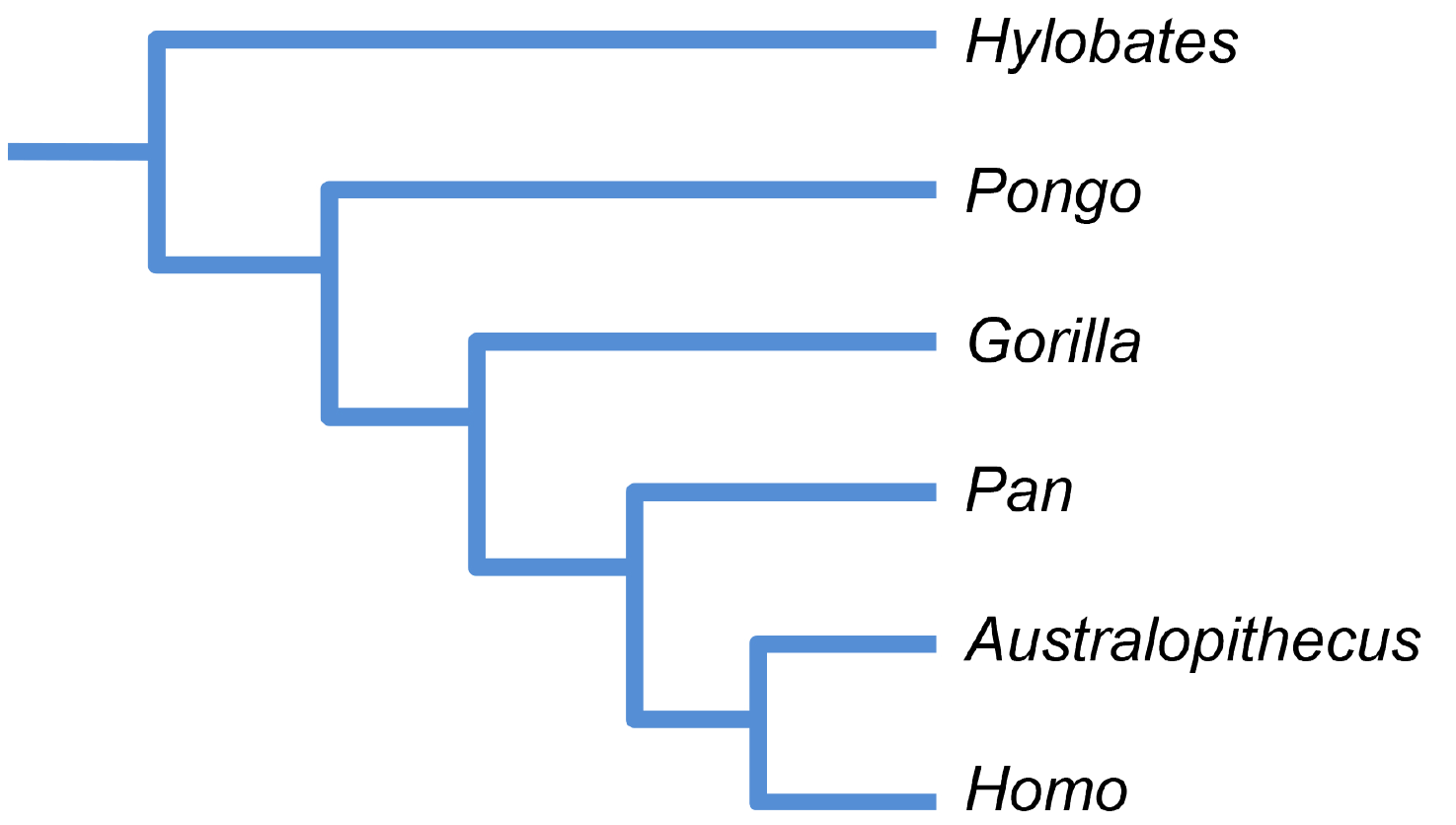}\\
 (b)\\
	\caption{Examples of data represented as networks (graphs with values for nodes/edges). 
    (a) A food web from Aspen Manitoba (redrawn from data in \cite{cohen2012community}). 
The arrows denote that one species serves as food for another. 
    (b) A phylogenetic tree (dendogram), showing inter-specific relationships (after 
    \cite{strait2015analyzing}). 
       }
	\label{Graph_data}
\end{figure}

There are existing standards for  representation of graphs as networks.  
Common data structures for storing graph data include \textit{adjacency lists} and \textit{adjacency matrices} (see Figure \ref{fig:graph1}).  These structures can also be represented  as a bit string such as the
\textit{graph6} format \cite{GRAPH6}. 

\begin{figure}[!h]
\begin{center}
  \includegraphics[width=0.75\linewidth]{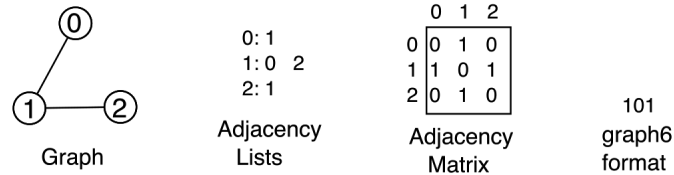}
  \end{center}
  \caption{Different Graph Representations - a diagram, adjacency list, adjacency matrix, and bit-string representation of the upper triangular adjacency matrix (\textit{graph6} format).}
  \label{fig:graph1}
\end{figure}

 
A proof of universality of graphs in underlying the structure of all complex systems is given in \cite{Green1992Emergent}.   
This universality of graphs means that the network models provide
important insights about many different systems 
\cite{paperin2011dual}. 
Certain network topologies such as trees occur widely and 
are known to convey important properties and behaviours. 
Perhaps the most far-reaching insight was the proof that random graphs undergo a critical 
phase change, from fragmented to connected, as the edge density increases \cite{er-1960}. 
This property of graphs accounts for a wide range 
of physical phenomena \cite{Green1994Connectivity}, such as crystallization, firing of a laser and the onset of percolation.  

\subsubsection{Observement System for Graphs} 
In this section, we demonstrate that graphs and related methods, data structures and standards are an observement system for networks.
Here graphs are viewed as observations of networks, and the mapping of the network to a graph data structure   is a homomorphism between common graph relations and network relations.    For example,  the subgraph relation is homomorphic to the subnetwork relation, and the isomorphism relation is homomorphic to `equality'  between networks. 

We  demonstrate that a mapping $m$ exists that gives an homomorphism between relations on the network and relations on the graphs and that  satisfies the observement properties. 

\subsubsection*{A Graph Observement System}
Formally, we can regard the system $\langle \mathcal{S}, \mathcal{O}, m\rangle$ of observements of networks as graphs as follows.  

Let $\mathcal{S}=(S,R)$ where $S$ is the set of networks of interest  and  $R$ is the set of relations on these networks.  An example of a network relation is the \textit{subnetwork} relation, where network $N_{1}$ is related to network $N_{2}$ if and only if 
network $N_{1}$ is a subnetwork of network $N_{2}$.  

Let $O$ be the set of graphs (observations) and $P$ be the set of relations on these graphs.  An  example of the graph relation  is the \textit{subgraph} relation  where $G_{1}$ is related to $G_{2}$ if and only if $G_{1}$ is a subgraph of $G_{2}.$

There are many possible mappings from real-world networks to graphs.  Here we use an adjacency matrix (e.g. per Figure \ref{fig:graph1}) where the entities in the network are represented as vertices and the relationships between pairs of entities as edges.  This is a well-established structure for representing graphs and has associated methods and standards. 
The mapping $m$ first creates an $n\times n$ adjacency matrix   where $n$ is the number of vertices. It then assigns non-zero entries corresponding to relationships between pairs of entities.  Thus, $m$ maps a network to an observation 
(graph).

\subsubsection*{Ob 1 is satisfied} 
This system satisfies Condition \textbf{Ob1} as the algorithm $m$ is a homomorphism from networks to graphs that maps relations on networks to relations on graphs.  For example, the mapping of the subnetwork relationship to the subgraph relationship.   
An example of the usefulness of these types of relation-preserving mappings  
is the mapping of subnetworks with significant interactions in biological networks  to subgraphs (network motifs) in graphs.  

\subsubsection*{Ob 2 is satisfied}
There is at least one algorithm $m$ that can be  used  to represent a network by an adjacency matrix  Thus,  the system $\langle \mathcal{S}, \mathcal{O}, m\rangle$ satisfies Condition \textbf{Ob2}.

\subsubsection*{Is Ob 3 satisfied?}
Clearly, there is a mapping between any of the common graphs representations (adjacency matrices and adjacency lists, graph drawings, and compact storage encodings e.g. graph6 \cite{GRAPH6}) and so   \textbf{Ob3} is  satisfied for these representations.
   Although, there exist mappings between algorithms that observe networks as graphs represented in these formats, it is an open question where this is always the case.

\subsection{Graph Observement Systems - Applications}

Networks have been employed to analyze a wide variety of data
from medieval politics \cite{yose2018network} 
to geospatial patterns \cite{su2016data}. 

A powerful relation that is preserved when mapping networks to graphs is the mapping of the subnetwork relation on networks to the subgraph relation on graphs.  Identifying common substructures in networks/graphs underlies many research areas; as ``\ldots [g]raphs seem to be the current answer to
the question no matter the type of information: molecular data, brain images or neural signals'' \cite{TDL-2016}. 
Thomas, Dongmin and Lee's survey of
similarity relations on graphs 
\cite{TDL-2016}, in particular, shows
how they can be used to understand neurological interactions and to identify neurological disorders.  Another example \cite{sasd-2015}, uses network topologies based on interactions between and within subnetworks to investigate  changes in the brain in people with  Alzheimer's disease.  Although,  their measures of connectivity are numbers such as path lengths and clustering coefficients, the comparison of the networks represented in this way adds a meaningful layer to the studies (see Figures 4 and 5 in \cite{sasd-2015}).

Trees are a common way to represent hierarchical relationships in many fields.  For example,
dendograms are standard tools for representing community structure in large networks \cite{PhysRevE.70.066111} and  are widely used in representing taxonomic relationships (Figure \ref{Graph_data}(b)).  


\subsection{Family Trees}
Genealogical information is often represented by  a directed acyclic graph (DAG), commonly known as a \textit{family tree}.  In this representation, the family members are represented by nodes and the ``child of" relationship is represented by a directed edge from parent to child and the "partnered with" relationship is represented by a bi-directed (or undirected)  edge.  The underlying graph representation is technically not a tree, as there may be more than one path between a pair of nodes.  A graph with directed edges is called a \textit{digraph}.

Large databases of genealogical data are maintained, for example \textit{ancestry.com}.  An interesting account of different visualisations of this data is given in \cite{U_VFT}.

A digraph $D$ can be represented as a pair $D=(V,E)$ where   $V=\lbrace 0, 1, \ldots, n-1\rbrace$ is the set of nodes, or vertices, and $E\subseteq V\times V$ is the set of directed edges connecting pairs of nodes.  Nodes may be labelled with information such as names and  date and place of birth. 

The empty digraph denoted $D(\emptyset)$ has a single vertex and no edges.  We use the notation $D_{1}+_{\overrightarrow{u_{1}u_{2}}}D_{2}$ and $D_{1}+_{u_{1}u_{2}}D_{2}$ to denote the digraphs obtained by connecting graphs $D_{1}$ and $D_{2}$ by the directed, or  bidirectional, edge  $u_{1}u_{2}$ respectively where $u_{i}\in V(D_{i})$ for $i\in \lbrace 1,2\rbrace$.    Similarly, we denote adding a directed, or bidirectional, edge between vertices $u$ and $v$ in $D$ by  $D+\overrightarrow{uv}$  or $D+uv$ respectively.  

If some relationships are missing, we may have a disconnected family tree.  In such cases the $D$ is a disjoint union of digraphs $D_{1}, \ldots, D_{k}$ which we denote as $D=D_{1}+\ldots +D_{k}$.

A digraph $D$ can then be recursively defined as:
\begin{align*}
  D:=& D(\emptyset)\text{  } | \text{  }D_{1}+_{\overrightarrow{u_{1}u_{2}}}D_{2}\text{  } | \text{  }D_{1}+_{u_{1}u_{2}}D_{2}\text{  }\\
  &\text{  }|\text{  } D+\overrightarrow{uv}\text{  }|\text{  }D+uv \text{  }|\text{  }D_{1}+D_{2}.  
\end{align*}

The following relationships are encapsulated within the digraph:
\begin{itemize}
\item\textbf{is child of}: For all $u,v\in V(D)$, $u$ is a child of $v$ if and only if $\overrightarrow{vu}\in E(D)$.  
\item\textbf{is parent of}: For all $u,v\in V(D)$, $u$ is a parent of $v$ if and only if $\overrightarrow{uv}\in E(D)$.  
\item \textbf{partnered}: For all $u,v\in V(D)$, $u$ partnered $v$  if and only if $uv$ is a bidirectional edge in $E(D)$.
\item \textbf{is related to}: For all $u,v\in V(D)$, $u$ is related to $v$ if and only if $u$ and $v$ are  connected by a  path (disregarding the direction of the edges) in  $D$.  
\item \textbf{is descendant of}: For all $u,v\in V(D)$, $u$ is a descendant of $v$ if and only if there exists a directed path $u=u_{0}, u_{1}, u_{2}, \ldots, u_{k}=v$ in $D$ where  each edge $\overrightarrow{u_{i}u_{i+1}}\in E(D)$ for $i\in \lbrace 0, \ldots, k-1\rbrace$.
\item \textbf{is predecessor of}: For all $u,v\in V(D)$, $u$ is a predecessor of $v$ if and only if there exists a directed path $v=v_{0}, v_{1}, v_{2}, \ldots, v_{k}=u$ in $D$ where  each edge $\overrightarrow{v_{i}v_{i+1}}\in E(D)$ for $i\in \lbrace 0, \ldots, k-1\rbrace$.
\end{itemize}

The underlying data representation of a digraph is commonly an adjacency matrix or adjacency list representation.  Clearly, there is a mapping between these two representations (and so the Observement Condition Ob3 is satisfied for these representations).  
To reflect its inherent symmetry, the
`partnered with' relationship is represented by a bi-directed   edge.  The underlying graph representation is technically not a tree, as there may be more than one path between a pair of nodes.

Due to  their hierarchical nature, a common way to visualise genealogical trees is as an \textit{ancestry chart} (for example, \cite{U_VFT}) where a selected node is positioned as the root of the `tree' (or subtree) and a tree-like structure emanates from the root. The problem of representation and layout of family trees is a specialisation of general graph layout problems (for example, \cite{s-2011}).  Family trees can be extended to other types of  genealogical networks (for example, \cite{GK-2007}).

As the size of the tree increases, the tree-like layout maybe replaced by alternative visualisations.   Systems for extracting genealogical data and visualising the genealogical structure using matrix representations are given in  \cite{bdfbw-2010}.

\subsection{Dendograms and Motifs in Graphs}

A \textit{dendogram} is a (tree) graph that illustrates the hierarchical relationships between clusters of data 
 (see Figure \ref{Graph_data}).  
Each leaf node corresponds to a particular cluster, and clusters corresponding to leaf nodes in the same sub-tree have common properties.  Leaf nodes belonging to smaller sub-trees are more closely related than those connected only by larger subtrees.  Dendograms are standard tools for representing community structure in large networks \cite{PhysRevE.70.066111}. 

Just as for strings, motifs are also used in the interpretation of graphs (see Figure \ref{Motifs}). 
Network \textit{motifs} are subgraphs that appear to occur more frequently than expected in certain networks \cite{M2002}. Often the motifs of a network are related to meaningful interactions such as social interaction \cite{G2002}, protein interactions and  gene transcription \cite{Y2004, burda2011motifs} and other biological interactions \cite{M2002, benson2016higher}.


As we found earlier with strings,  motifs can be used to  represent common patterns between graphs which in turn represent networks.  However, at another level, motifs can be used as an observement themselves.   Thus, we can have a  mapping from  $N\rightarrow G \rightarrow M $ which maps a network to a graph, and a graph to a motif. 

\subsection{Graphs as theoretical tools}

Graphs and networks have gained widespread theoretical prominence in the study of complex systems. This is because they provide a common theoretical model for patterns of interactions. The following theorems guarantee this by providing a proof of universality
\cite{Green1992Emergent, Green1994Connectivity}.  

\begin{theorem}
Graphs underlie the structure of all complex systems.
\end{theorem}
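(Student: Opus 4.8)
The plan is to reduce the informal assertion to a statement about the existence of a structure-preserving representation, and then to construct that representation explicitly. First I would fix a working definition of a \emph{complex system} as a pair $\langle C, I\rangle$ consisting of a set $C$ of components (entities, agents, states, variables, \ldots) together with a family $I$ of interactions among them, where each interaction is a relation of finite arity on $C$. This is the minimal structure any account of a ``system'' must supply: without components there is nothing to speak of, and without interactions the components form not a system but merely a set. Under this reading the theorem becomes: every such $\langle C, I\rangle$ admits a graph $G=(V,E)$ together with a mapping sending components to vertices and interactions to edges in a way that preserves the system's structural relations, i.e.\ a homomorphism in the sense of Condition \textbf{Ob1} from Section \ref{formal definition}.

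The construction then proceeds in two steps. For interactions that are already binary, I would set $V=C$ and place an edge (directed or undirected as appropriate, and labelled by the interaction type) between the two components it relates; attributes of components and interactions are carried as node and edge labels, so the result is a network in the sense of Section \ref{secGraphs}. For an interaction of arity $k>2$ -- say a chemical reaction among several reactants, or a multi-party transaction -- I would pass through the incidence (bipartite/factor) graph: introduce one auxiliary vertex per higher-order interaction and join it to the vertices of the components it involves. A standard argument shows this incidence encoding is faithful, so no generality is lost by assuming all interactions binary. Composing the two steps yields, for every complex system, a graph together with the required mapping $m$.

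The remaining work is to verify that $m$ is genuinely a homomorphism, i.e.\ that the structural relations one cares about on systems -- subsystem, composition, isomorphism/equivalence, connectivity of influence -- correspond under $m$ to the matching graph relations (subgraph, disjoint union and edge addition, graph isomorphism, path-connectedness). Here I would lean on the recursive vocabulary already developed for digraphs in Section \ref{secGraphs} (the operators $D_1 +_{\overrightarrow{u_1u_2}} D_2$, $D+\overrightarrow{uv}$, $D_1+D_2$, and so on), checking the biconditionals relation by relation, and would cite \cite{Green1992Emergent, Green1994Connectivity} for the versions of these equivalences already established there.

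The hard part will not be any single calculation but the framing: the theorem is only as strong as the definition of ``complex system'' that makes it true, and a definition broad enough to be uncontroversial (components plus interactions) makes the graph representation almost tautological, while a narrower, more contentful definition risks excluding systems practitioners would want to include. I therefore expect the real content of the proof to lie in arguing that the components-and-interactions abstraction loses nothing essential -- in particular that dynamical, temporal and continuous aspects of a system can be folded into labels, auxiliary state-vertices, or families of graphs indexed by time -- and that the higher-order-to-bipartite reduction is structure-preserving for the relations of interest, not merely for set membership.
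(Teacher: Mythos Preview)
Your proposal is coherent but takes a genuinely different route from the paper. The paper's proof does not attempt to define ``complex system'' abstractly and then build a graph from components and interactions. Instead it argues indirectly through the \emph{modelling formalisms}: it observes that models of complex systems are in practice expressed using only a small repertoire of representations (matrices, systems of equations, cellular automata, and the like), and then claims that graphs are implicit in each of those representations; hence graphs are implicit in any system for which such a model is valid. The argument is essentially a finite case analysis over modelling languages, with the details deferred to \cite{Green1992Emergent, Green1994Connectivity}.

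The trade-off is exactly the one you already identified. Your axiomatic route is cleaner and does not rely on an exhaustive (and potentially incomplete) enumeration of modelling paradigms, but --- as you say --- it buys that generality by choosing a definition of ``complex system'' under which the conclusion is close to tautological. The paper's route sidesteps the definitional problem entirely: it never says what a complex system \emph{is}, only what tools people use to model one, and pushes the universality burden onto showing that each such tool encodes a graph. That makes the paper's argument more empirical and more falsifiable (exhibit a modelling formalism with no implicit graph and the claim fails), whereas yours is more self-contained but correspondingly harder to contest. Your bipartite reduction for higher-arity interactions and the verification of the \textbf{Ob1} biconditionals are sound additions that the paper does not attempt; they would strengthen either approach.
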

\begin{proof}
  The proof \cite{Green1992Emergent, Green1994Connectivity} rests on the observation that models of complex systems use only a small number of representations (e.g. matrices, systems of equations, cellular automata). So if we show that graphs are implicit in those representations, then graphs are implicit in the structure of any complex system for which that representation is used to create a valid model. 
\end{proof} 

\begin{theorem}
  In any deterministic automaton with a finite number of states, the state space forms a directed graph.
\end{theorem}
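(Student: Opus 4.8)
The plan is to unwind the definition of a deterministic automaton and exhibit the directed graph explicitly. Recall that a deterministic automaton consists of a finite set of states $Q$, an input alphabet $\Sigma$, and a transition function $\delta : Q \times \Sigma \rightarrow Q$ (a designated start state and set of accepting states may be present but play no role here). First I would define the candidate digraph $D = (V,E)$ by setting $V := Q$ and $E := \{ (q, q') \in Q \times Q : \delta(q,a) = q' \text{ for some } a \in \Sigma \}$. Since $V$ is finite and $E \subseteq V \times V$, this $D$ is a digraph in the sense of Section~\ref{secGraphs}. The only modelling choice is whether to collapse parallel transitions (several symbols inducing the same state change) into a single edge, or instead to keep the $\Sigma$-labels on edges and obtain a labelled digraph; either way the underlying object is a digraph.

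The second step is to check that this construction actually captures the dynamics of the automaton, i.e. that it is an observement in the sense of Condition~\textbf{Ob1}. The relevant relation on state spaces is one-step transition, or its transitive closure, reachability: $q'$ is reachable from $q$ in the automaton if and only if there is a string $w = a_{1}a_{2}\cdots a_{k}$ with $\delta^{*}(q,w) = q'$, where $\delta^{*}$ is the usual extension of $\delta$ to strings. One shows by induction on $k$ that $\delta^{*}(q,w) = q'$ holds precisely when there is a directed path $q = u_{0}, u_{1}, \ldots, u_{k} = q'$ in $D$ with each $(u_{i},u_{i+1}) \in E$: the base case $k=0$ is the identity, and the inductive step uses the definition of $E$ directly. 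Hence the map $m$ sending an automaton to $D$ is a homomorphism from the transition and reachability relations on state spaces to the arc and directed-path relations on digraphs, which gives \textbf{Ob1}; \textbf{Ob2} is then immediate, since the construction above is itself such an algorithm.

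The step I expect to need the most care is not a calculation but a conceptual one: making precise what ``the state space forms a directed graph'' is asserting, and in particular how determinism enters. Determinism guarantees that $\delta$ is a genuine function, so every state has a well-defined out-neighbourhood; without it one obtains a relation rather than a function, but the natural object is still a digraph, so the hypothesis is used chiefly to keep the correspondence clean (and to let each edge carry at most one meaning per input symbol). Finiteness of $Q$ is what lands us inside the class of finite graphs to which the tools of Section~\ref{secGraphs} apply; note that if $\Sigma$ is infinite the edge set is nonetheless finite, because $E \subseteq Q \times Q$. I would close by remarking that \textbf{Ob3} need not hold here — relabelling states, or choosing the labelled versus unlabelled variant, yields genuinely different yet equally valid digraph representations — so this is in general a \emph{weak} observement system, consistent with the discussion following Condition~\textbf{Ob3}.
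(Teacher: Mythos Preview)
Your proposal is correct and follows essentially the same construction as the paper: take the states as the vertex set and let the transition function generate the edge set. The paper's own proof is considerably terser---it works with a bare successor function $S:A\to A$ (so exactly one outgoing edge per state) rather than your more standard DFA model with an input alphabet, and it does not explicitly verify the observement conditions---so your added rigor on reachability, \textbf{Ob1}--\textbf{Ob3}, and the role of determinism goes well beyond what the paper supplies, but the underlying idea is identical.
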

\begin{proof}
  For any automaton $\langle A,S\rangle$, the proof \cite{Green1992Emergent, Green1994Connectivity} defines the set of states $A$ 
to be nodes in a graph and the successor function $S$ defines a set of edges $R_{A}$ where 
$R_{A} = \lbrace (x,y) | S(x) = y\rbrace$,  where $x,y  \in A$. 
\end{proof}

The proof extends to arrays of automata, since in any such system, the 
suite of individual states at any stage defines a state for the array. 
By defining edges for any transition with non-zero probability, the 
result also extends to stochastic processes.

This universality of graphs means that the network model provides
important insights about many different systems. 
Certain network topologies (trees, for instance) occur widely and 
are known to convey important properties and behaviours. 
Perhaps the most far-reaching insight was the proof that random graphs 
undergo a critical 
phase change, from fragmented to connected, as the edge density increases \cite{er-1960, Green1994Connectivity}. 
This property of graphs accounts for a wide range 
of physical phenomena, such as crystallization, firing of a laser and the onset of percolation.

\subsection{Other Applications}

Graphs have proved a convenient way to represent transportation, utilities and other large-scale infrastructure. 
A useful abstraction is ``networks of networks'' \cite{d2014networks}, which allows efficient 
identification of key properties such as network reliability (important in utility networks), network flow (important in transportation networks), and catastrophic cascading failures
\cite{brummitt2015coupled}. 
These critical properties are preserved by the  homomorphism (Condition \textbf{Ob1}) illustrating the benefits of observement systems: they abstract and simplify while maintaining these important relationships.

Another example of the usefulness of graphs in  observement is modelling the spread of disease by encapsulating the connections between members in the community (e.g. \cite{GB-2007}, \cite{agh-2011}).  These observations (graphs) can be used to identify how to fragment the graph into subgraphs that minimise interactions and thus minimise the spread of such diseases.


\section{Applications of Observement Theory}

The generalisation of the concept of measurement to include non-numerical data 
prompts us to ask what kinds of analytic methods can be used. Numerical
methods may no longer necessarily apply, so there is a need to identify new kinds of 
analyses that apply to each kind of representation. 



\subsection{Motifs}

Motifs are small, re-occurring structures with significant meaning. They occur in many contexts. In music, for instance, a short phrase (\textit{leitmotif}), is sometimes used to identify people or characters 
in a ballet, opera or movie. In recent years they have become 
popular analytic tools and are used to analyze both strings and graphs. 

In strings, motifs are substrings that  typically occur more frequently.
A good example of motif use is in the observement of proteins, via strings of amino acid sequences~\cite{Genbank}. 

Many methodologies have been developed for interpreting gene 
and amino acid sequences~\cite{Genbank}. 
Most relevant here are methods that compare different sequences. 
For genes, important questions include  relationships between 
organisms, and unravelling genetic regulatory networks. 
For proteins, some of the most important questions concern 
their structure and function. 
Motifs can be used to describe families of proteins and interrogate their structure. 

A good example is the use of motifs to describe families of proteins, 
as pioneered by Bairoch and his colleagues. 
The protein database Prosite 
\cite{bairoch1991prosite} stores information about the structural 
homology of families of proteins \cite{hofmann1999prosite}. 
The database represents proteins as amino acid sequences and 
uses motifs to capture homology patterns. 

\textit{Motifs} are short sections of strings that are associated with  
known structural features, such as alpha helices or beta sheets. 
They are characterized by identifying:
\begin{itemize}
\item 
sets of amino acids that share common physical properties~\cite{chen2002classification}; and 
\item
sequences that play a role in folding. 
\end{itemize}

Some typical rules are given in Figure \ref{Rules}.

 \begin{figure}
\setlength{\fboxsep}{10pt}
\begin{center}
\noindent\fbox{%
    \begin{minipage}{0.95\linewidth}
 \includegraphics[width=0.95\linewidth]{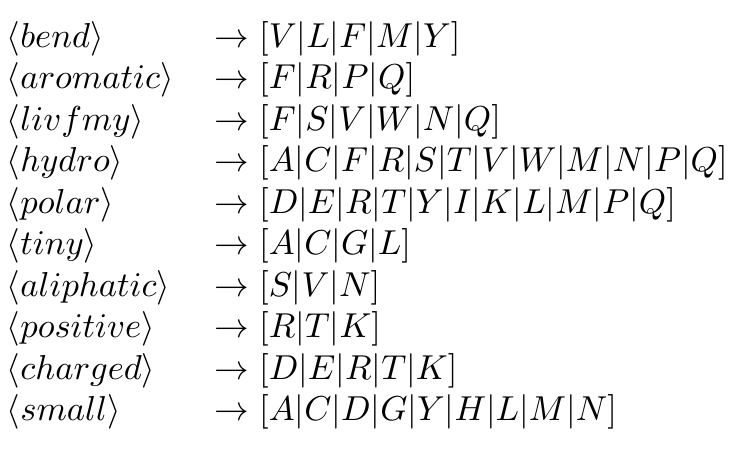}
\end{minipage}}
\end{center}
\caption{Rules for construction of protein representation.}\label{Rules}
 \end{figure}
 
In practice, the database does not use these labels in its descriptors, preferring 
instead to provide a standalone code. 
Figure \ref{Motif} gives an example of three amino acid sequences and the motif pattern they share in common. 
This pattern highlights the structural similarities between the sequences. 
In this example $x(N)$ means ``any $N$ amino acids'' and $\lbrace A,B,C\rbrace$ means 
any one of the listed amino acids.

\begin{figure}
\noindent\fbox{
   \begin{minipage}{0.9\linewidth}
   \includegraphics[width=3.0in]{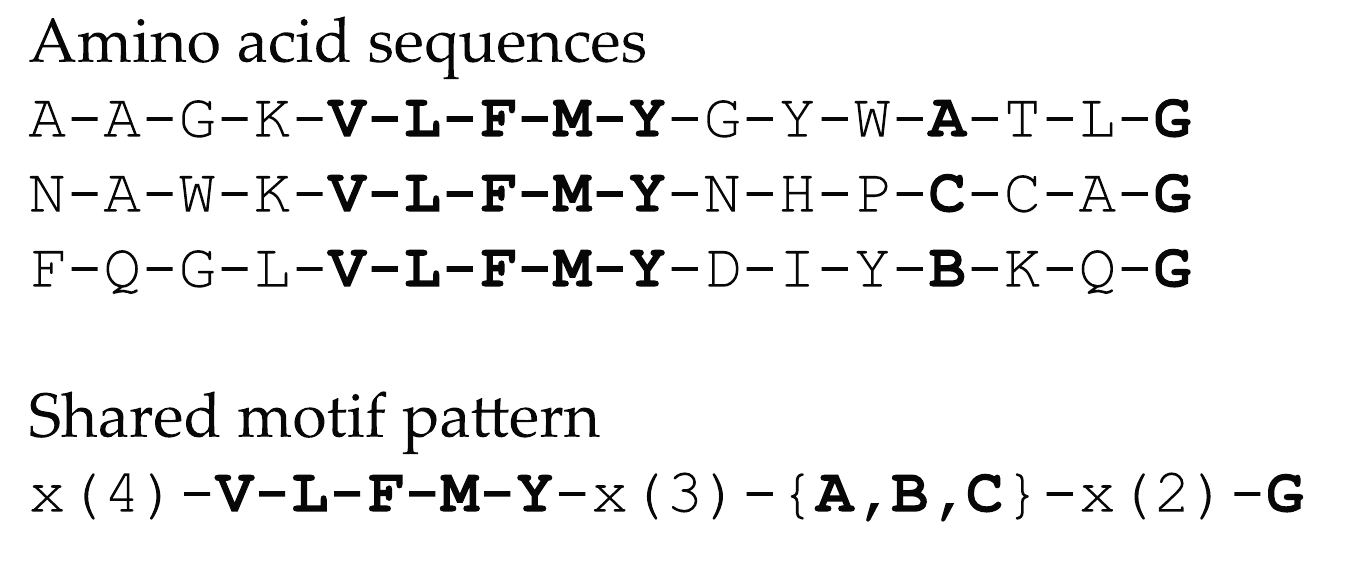}
   \end{minipage}}
\caption{Using motifs to express similarities between amino acid sequences.}\label{Motif}
 \end{figure}

In a similar vein,
gene sequence analysis uses alignment methods to identify 
similarities and differences between corresponding genes of distinct species (for example, see \cite{LHWFRHMAD_2009}); 
several methods make use of motifs to assist in this process.  


As we saw earlier with strings, motifs can be used to represent common patterns between graphs, which in turn represent networks (see Figure \ref{Motifs}). 
\textit{Network motifs} are small sub-graphs that occur more frequently in particular networks of interest 
 \cite{M2002}. These motifs can be related to meaningful interactions such as social interactions \cite{G2002}, protein interactions and  gene transcription \cite{Y2004, burda2011motifs} and other biological interactions \cite{M2002, benson2016higher}.  

\begin{figure}[!h]
	\centering	
 \includegraphics[width=0.8\linewidth]{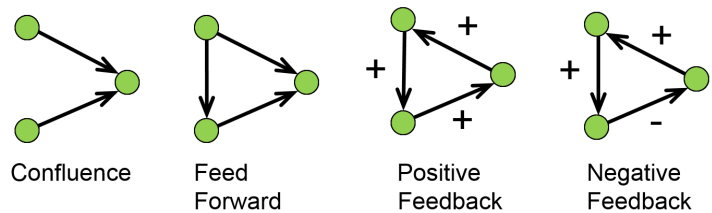}
	\caption{Simple examples of network motifs. 
       } 
	\label{Motifs}
\end{figure}

\subsection{Templates}

In the structure of complex systems, \textit{motifs} are closely related to 
\textit{modules}, which are self-contained, repeatable units of structure.
Examples of modules include large corporations, which divide their operations into self-contained 
units, such as finance, manufacturing; and plants, which grow by repeatedly adding modules, 
such as branches, leaves and buds. 

Shared motif patterns in families of proteins, such as in Figure \ref{Motif}, 
leads to the idea of templates that describe a common pattern shared 
different phenomena share. 
The idea of templates and modules is familiar in language.
For instance, the sentences  

``A cat ate my canary.''	and	``The dog buried a bone.''

are analogous in the sense that share the common grammatical structure

$<article> <noun> <verb> <object>$.

In this case, the underlying template 
relates to describing actions in the real world. 

The idea of templates is also  widely used in taxonomy. 
For example, the common body plan for arthropods 
(insects, spiders, crabs etc)  is 

$<HEAD><SEGMENT>^+<TAIL>$

where $<SEGMENT>$ is a body unit with two legs.

\subsection{Complexity}

An important property of measurement is that it can make vague concepts precise \footnote{A good example is the way measurement of diversity transformed ecology. Initially, the notion of diversity was vague, but in the mid 20th Century, ecologists introduced a succession of metrics \cite{simpson1949measurement, pielou1969diversity}.}. 
However there are many concepts that numerical measurement cannot represent adequately;
such as \textit{organisation} and \textit{process}\footnote{Also, particularly in the humanities, certain observements involve entangled contexts (such as historical and social conventions, see Section \ref{secGeneralisationMeasurement}).}.
As we saw earlier (Section \ref{secNetworks}), graphs are implicit 
in all complex systems. 
Graphs-as-observements handle this problem gracefully, as they can do double-duty: such as in database Entity-Relationship Diagrams (ERDs) for the former; and flowcharts for the latter.
Recalling Section \ref{secGeneralisationMeasurement}, this means that many systems can be represented as networks of nodes and edges, 
which is more intuitive than reduction to numbers, as is the standard conception of conventional measurement \cite{finkelstein1984review,michell2007representational}. 

Several systems base measures of complexity on strings.  
They define the complexity of a system is the length of the 
shortest message required to describe the system
\cite{chaitin1966length, kolmogorov1968three, solomonoff1964formal}. 
Wallace refined this idea, basing his idea of Minimum Message Length 
on the computational model of a $program$ plus $data$
\cite{wallace2005statistical}. 
This idea is consistent with Papentin's division of complexity 
into two components: 
\textit{primary order} (ordered complexity or pattern) and 
\textit{secondary order} (entropy, the random complexity) 
\cite{papentin1980order}.  
All of these approaches rest on the assumption that there is an 
absolute minimum value. 

An alternative is to define complexity 
$relative$ to a particular observement frame (based either on a graph or 
string representation)
\cite{green1996towards, green2011elements, green2001towards}. 
This is a more practical approach to dealing with 
complexity because we can apply a consistent method (i.e. the same 
frame of reference) when comparing different systems. 
For instance, a social network of key power-brokers in an organisation would look very different to the network of social sports participation of employees within the same organisation. 

Conversely, any observement frame, based on graphs or strings, immediately
assigns a complexity value to the observed object. 
We can represent any complex systems as a graph, and 
we can describe that graph as a text string (see Figure \ref{fig:graph1}). 
The length of that text string then provides a number, which is a measure
of the system's complexity. 
That is, there are mappings:
$$GRAPH \rightarrow STRING \rightarrow NUMBER$$
Thus we can associate a measure of relative complexity with any observement based on graphs, or strings. 

\subsection{Data storage and compression}
\subsubsection{String-based Methodologies} 

As we saw in Figure \ref{Motif}, protein analysis uses motif patterns to infer features of protein structure. 
Such ideas are also common in syntactic pattern recognition
\cite{brayer1982syntactic, schalkoff1992pattern}, 
which applies parsing and other inference methods to formal descriptors of observed patterns. 
Here,  motifs can be viewed as representing common patterns between strings.  However, at another level, motifs can be used as an observation themselves.  

Thus, we can have a nested `Russian \textit{Matryoshka} doll' mapping
$P\rightarrow S \rightarrow M \rightarrow  \mathbf{N}$
which maps a protein to a string, a string to a motif, and a motif to a number. Each of these is an observation in some 
observement system,
but  at each level of the nesting, there is some loss of information. Observe that the reduction to numbers, being the ultimate step of the mapping, resulted in the loss of more information.

Several methods exploit the concept of string motifs by using
syntactic and related approaches to storing data. 
A simple example is LZW compression, which maps data to a string \cite{ziv1977universal}. 
It achieves compression by creating a dictionary of repeating 
elements in the string. 
This method achieves lossless compression in certain kinds of data, 
such as large-scale genomic databases \cite{kuruppu2010relative}.
LZW compression could be regarded as an observement that maps data 
to a pair $\langle M,S \rangle$, where $M$ is a set of string motifs 
and $S$ is a string in which motifs are replaced by identifying codes. 

\subsubsection{Graph-based Methodologies} 
\label{secDatabases}
There are numerous examples of databases that demonstrate the usefulness of observements in storing data.


\textit{Graph databases} highlight a shift from the emphasis on collecting data in rigidly-defined and \textit{normalised} tables \cite{codd1970} to collecting information about relationships.
They are designed ``to store data about relationship-rich environments'' \cite{coronel2016}. Graph databases are particularly useful where the interactions between entities are important, or where the shape of the structure is meaningful, or when no suitable pre-determined `template' exists to represent (often incomplete) data. Hence, its popularity lies in its innate suitability for social network applications, where ``relationships become just as important as the data itself'' \cite{coronel2016}.
 

A survey of these databases is given in \cite{AG-2008}, which lists advantages of using this representation;
e.g. users are able to easily visualise their data and relationships;
queries are easily associated with familiar graph operations;
and the availability of special structures to store graphs and efficient algorithms to
process them. 

The implicit graphical nature of information has also led to new kinds of analytic methods for knowledge discovery in large databases. 
A widely used example is \textit{association analysis}, 
which searches for connections between items in a database
\cite{agrawal1993mining,hipp2000algorithms}. 



\section{Conclusion}

In this study we have shown that the formal definition of measurement, notwithstanding the requirement of numeric representation, extends to other, non-numeric representations. We call such systems \textit{observement}, as a generalisation of, and to distinguish them from, traditional numeric measurement systems. We have also shown that several systems based on two commonly used data representations - strings 
(Section \ref{secStrings}) and graphs (Section \ref{secGraphs}) - immediately satisfy the definition. 

Revisiting the crucial roles of a traditional measurement system, listed in Section \ref{secIntro}, observement allows the gathering of data in a standard way \textit{qua} measurement.
Standard representations lend themselves to standard methods of
interpretation. 
As we have seen, strings and graphs are already used to represent many different kinds of data;
and general methods, such as motif detection and 
association analysis can be used in many different areas of study. 

Secondly, observement produces data with well-known properties. 
For example, strings and graphs share some commonalities with numerical measurement. 
Analogies for relations (such as equality) and operations (such as addition)
do exist for both strings and graphs, but are richer in variety. 
For instance, appending one string to the end of another provides an analogy for addition, but one string could also be inserted anywhere within the other. Likewise two graphs can be joined by connecting any pair of nodes with an edge, or by identifying sets of nodes
to bring together with intermediate edges.
Moreover, strings and graphs also introduce other kinds of properties. 
Graphs, for instance, can exhibit clusters, modularity, and various topologies.
So observement opens up the prospect of formal systems with new kinds of operators.  
An enormous array of tools exists to support numerical measurement systems.  

Thirdly, observement has the power of mathematical abstraction.  We have illustrated this for graphs, which encapsulate the organization 
of large networks by relationships (edges) between entities, 
and in strings, which encapsulate extremely complex biological structures, such as proteins and DNA.

Lastly, observement can shape the development of theory and methods. 
Techniques for analysing and interpreting strings and graphs are now very active research areas. 
The example of motifs and other patterns, which are widely used  
to interpret strings and graphs, provide a case in point. 
A potential contribution of observement theory would be to encourage the development
of further methods and  applications based around widely used representations. 
In the early days of measurement theory,
only numeric data offered formal analytic methods of interpretation;
modern high performance computers and interactive visualization can effectively bring many kinds of observement systems within the scope of formal analysis. 

Finally, we point out that the examples we have given here 
are just the tip of the iceberg. 
There are many other kinds of data that already are, 
or could be observed using these representations.
There are also many other formal representations that could serve 
for certain kinds of non-numeric data.

\vskip1pc






\section*{Acknowledgment}
We would like to thank Professor John Crossley, Professor Graham Farr and Professor Mark Sanderson for useful suggestions on earlier versions of the manuscript. 

\bibliography{referencefile.bib}

\bibliographystyle{plain}

\end{document}